\DeclareMathOperator{\var}{Var}
\DeclareMathOperator{\diag}{diag}
\newtheorem{theo}{Theorem}
\newtheorem{lemma}{Lemma}
\newtheorem{remark}{Remark}
\newtheorem{example}{Example}
\newcommand{\bs}{\boldsymbol{\mathrm{s}}}
\newcommand{\br}{\boldsymbol{\mathrm{r}}}
\newcommand{\be}{\boldsymbol{\mathrm{e}}}
\newcommand{\bu}{\boldsymbol{\mathrm{u}}}
\newcommand{\bt}{\boldsymbol{\mathrm{t}}}
\newcommand{\bA}{\boldsymbol{\mathrm{A}}}
\newcommand{\buone}{\bA^{T}\br+\be_{1}}
\newcommand{\coloneqq}{:=}
\begin{document}
\title{CRYSTALS-Kyber With Lattice Quantizer}
\specialpapernotice{}
\author{\IEEEauthorblockN{Shuiyin  Liu}
\IEEEauthorblockA{{Cyber Security Research and Innovation Centre} \\
{Holmes
Institute}\\
Melbourne, VIC 3000, Australia \\
Email: SLiu@Holmes.edu.au}
\and
\IEEEauthorblockN{Amin Sakzad}
\IEEEauthorblockA{{Department of Software Systems $\&$ Cybersecurity} \\
{Faculty of Information Technology, Monash University}\\
Melbourne, VIC 3800, Australia \\
Email: Amin.Sakzad@monash.edu}
}

\maketitle
\begin{abstract}
Module Learning with Errors (M-LWE) based key reconciliation mechanisms (KRM) can be viewed as quantizing an M-LWE sample according to a lattice codebook. This paper describes a generic M-LWE-based KRM framework, valid for any dimensional lattices and any modulus $q$ without a dither. Our main result is an explicit upper bound on the decryption failure rate (DFR) of M-LWE-based KRM. This bound allows us to construct optimal lattice quantizers to reduce the DFR and communication cost simultaneously. Moreover, we present a KRM scheme using the same security parameters $(q,k,\eta_1, \eta_2)$ as in Kyber. Compared with Kyber, the communication cost is reduced by up to $36.47\%$ and the DFR is reduced by a factor of up to $2^{99}$. The security arguments remain the same as Kyber. 
\end{abstract}

\section{Introduction}
Recent years have seen extraordinary progress in post-quantum cryptography, due to the fact that the public-key cryptosystems in use today are
known to be efficiently breakable by quantum computers. In August 2023, the National Institute of Standards and Technology (NIST) published draft post-quantum cryptography standards, where CRYSTALS-Kyber is selected for general encryption and key encapsulation mechanisms (KEM) \cite{NISTpqcdraft2023}. The security of Kyber relies on the Module Learning with Errors (M-LWE) assumption, which is widely believed to be post-quantum secure. The LWE family, including standard LWE \cite{Regev05}, Polynomial/Ring-LWE (R-LWE) \cite{RLWE2010}, Middle Product-LWE (MP-LWE) \cite{RSSS17}, and M-LWE \cite{BS+15}, has been used to construct many types of advanced cryptosystems, such as fully homomorphic encryption \cite{homomorphiLWE2011} and digital signature \cite{NISTpqcdraftDS2023}.  However, LWE based encryption schemes generally require a large modulus $q$, resulting in large communication overhead. For example, Kyber introduces a $24-49$ times more communication overhead compared to the Elliptic Curve-based public-key cryptosystem.
\begin{sloppypar}
The communication overhead can be measured by the ciphertext expansion rate (CER), i.e., the ratio of the ciphertext size to the plaintext size. Recent studies on LWE-based cryptosystems focus on reducing the CER \cite{CiphertextQuantization2023} \cite{NewhopeECC2018} \cite{FrodoCong2022}\cite{liu2023lattice}. In \cite{CiphertextQuantization2023}, the asymptotic value of CER was derived for LWE-based encryption approaches.  In \cite{NewhopeECC2018}, a concatenation of BCH and LDPC codes was proposed for NewHope-Simple, to reduce the CER by $12.8\%$. In \cite{FrodoCong2022}, lattice codes were introduced to FrodoKEM, to reduce the CER by a factor of up to $6.7\%$. In \cite{liu2023lattice}, a coded modulation scheme using lattices and BCH codes was applied to Kyber, where the CER is decreased by a factor of $24.49\%$.
In summary, the idea behind the above approaches is to reduce the decryption failure rate (DFR) via error-correcting techniques: A smaller DFR allows the use of a smaller modulus $q$ \cite{FrodoCong2022} or further compression of the ciphertext size \cite{NewhopeECC2018}\cite{liu2023lattice}. Hence the CER is decreased.
\end{sloppypar}
Since LWE-based encryption schemes commonly compress, or more generally, quantize
ciphertexts to reduce CER \cite{NISTpqcdraft2023}\cite{NewhopeECC2018}, an interesting idea is to design a key exchange protocol purely based on quantizing an LWE sample. In the literature, this approach is referred to as key reconciliation mechanisms (KRM): Given quantized LWE samples, Alice and Bob can compute the same shared secret.
A standard LWE-based KRM scheme was proposed in \cite{KRMPeikert2014}, where a $1-$dimensional rounding-off quantizer is employed. In \cite{NewHope2016}, $\mathsf{D_4}$ lattice quantizer was used to construct an R-LWE based KRM scheme. This approach was further extended to the M-LWE scenario with $\mathsf{E_8}$ lattice quantizer and an even modulus $q$ in \cite{MLWEE82021}. Intuitively speaking, KRM-based approaches should enjoy a smaller CER compared with Kyber, since they use better quantizers. However, the CERs in \cite{NewHope2016} and \cite{MLWEE82021} are larger than KYBER768 \cite{Kyber2021}, by a factor of $88.2\%$ and $8.8\%$, respectively. Another issue is the choice of $q$. To ensure the shared secret is uniformly random, KRM schemes require either an even $q$ \cite{MLWEE82021} or a prime $q$ with a dither \cite{KRMPeikert2014}\cite{NewHope2016}. Both choices will increase the computation complexity: the former prevents using the Number Theoretic Transform (NTT) for efficient polynomial multiplication (Karatsuba is slower than NTT), while the latter adds more computing steps. An open question is how to construct a KRM scheme with a smaller CER than Kyber. 

In this work, we propose a generic M-LWE-based KRM framework, valid for any dimensional lattices and any modulus $q$ without a dither.
Note that the work in \cite{MLWEE82021} only considers $\mathsf{E_8}$ lattice, and moreover does not include explicit bounds on DFR. Thus it cannot be generalized to higher-dimensional lattices.
We first remove the constraint of even $q$ in \cite{MLWEE82021}.
The idea is rejection sampling: for a prime $q$, we accept a M-LWE sample from the ring $R_{q-1}$ rather than $R_{q}$. The rejection probability is about $n/q$, where $n$ is the degree of the ring $R_{q}$. Secondly, we design a generic lattice quantizer using an integer generator matrix, which is valid for any dimension. Thirdly, we derive an explicit upper bound on the DFR of the proposed M-LWE-based KRM framework. This bound tells us to select lattices with large packing radius and small covering radius, since they can reduce the DFR and CER simultaneously. We consider $16-$dimensional Barnes–Wall (BW16) and $24-$dimensional Leech lattices (Leech24), since they outperform $\mathsf{D_4}$ and $\mathsf{E_8}$ lattices used in \cite{NewHope2016}\cite{MLWEE82021}. Last but not least, we demonstrate a M-LWE-based KRM scheme using the same security parameters $(q,k,\eta_1, \eta_2)$ as in KYBER768. Compared with KYBER768, the CER is reduced by up to $36.47\%$ and the DFR is reduced by a factor of up to $2^{99}$. We summarize our results in Table \ref{sum_contribution}. The security arguments remain the same as KYBER768.
\begin{table}[th]
\centering
\begin{threeparttable}[b]
\caption{Comparison of KYBER768 and KRMs}
\label{sum_contribution}\centering
\vspace{-3mm} 
\begin{tabular}{|c|c|c|c|c|}
\hline
& $q$ & CER  &  DFR & Source \\ \hline
KYBER768 & $3329$ & $34$ & $2^{-164}$ &\cite{NISTpqcdraft2023}\cite{Kyber2021}\\ \hline
KRM-$\mathsf{E_8}$ & $2048$ & $37$ & $2^{-174}$ &  \cite{MLWEE82021}\\ \hline
KRM-$\mathsf{E_8}$ & $3329$ & $31$ & $2^{-174}$ &  This work\\ \hline
KRM-$\mathsf{BW16}$ & $3329$ & $26.4$ & $2^{-263}$ & This work \\ \hline
KRM-$\mathsf{Leech24}$ & $3329$ & $21.6$ & $2^{-172}$ & This 
work \\ \hline
\end{tabular}
\vspace{-1mm}
\end{threeparttable}
\end{table}
\vspace{-1mm}

\emph{Organization:} Section II presents notation, Kyber, and $\mathsf{E_8}$ based KRM scheme. Section III describes the proposed KRM framework. Section IV gives the DFR analysis of the proposed KRM framework. Section V shows an example of the proposed KRM scheme. Section VI concludes with conclusion remarks. 


\section{System Model}
\subsection{Notation}
\emph{Rings:} We use $R$ and $R_{q}$ to represent the rings $\mathbb{Z}[X]/(X^{n}+1)$ and $\mathbb{Z}_{q}[X]/(X^{n}+1)$, respectively. In this work, the value of $n$ is $256$.
Matrices and vectors are represented as bold upper-case and lower-case letters, respectively. The transpose of a vector $\mathbf{v}$ or a matrix $\mathbf{A}$ is denoted by $\mathbf{v}^T$ or $\mathbf{A}^T$, respectively.  Elements in $R$ or $R_{q}$ are denoted by regular font letters, while a vector of the coefficients in $R$ or $R_{q}$ is represented by bold lower-case letters. The default vectors are column vectors.


\emph{Sampling and Distribution:} For a set $\mathcal{S}$, we write $s \leftarrow \mathcal{S}$ to mean that $s$
is chosen uniformly at random from $\mathcal{S}$. If $\mathcal{S}$ is a probability
distribution, then this denotes that $s$ is chosen according to
the distribution $\mathcal{S}$. For a polynomial $f(x) \in R_q$ or a vector of such polynomials, this notation is defined coefficient-wise. We use $\var(\mathcal{S})$ to represent the variance of the distribution $\mathcal{S}$. Let $x$ be a bit string and $S$ be a distribution taking $x$ as the input, then $y\sim S\coloneqq\mathsf{Sam}\left(x\right)$ represents that the output $y$ generated by distribution $S$ and input $x$ can be extended to any desired length. We denote $\beta_{\eta}=B(2\eta
,0.5)-\eta $ as the central binomial distribution over $\mathbb{Z}$.




\emph{Lattice, Product Lattice, and Lattice Quantizer:}
An $\ell$-dimensional lattice $\Lambda$ is a
discrete additive subgroup of $\mathbb{R}^m$, $m \leq \ell$. Based on $\ell$ linearly independent vectors $b_1, \ldots, b_\ell$ in $\mathbb{R}^m$, $\Lambda$ can be
written as
$\Lambda =\mathcal{L}(\mathbf{B})=z_{1}\mathbf{b}_{1}+\cdots +z_{\ell}\mathbf{b}%
_{\ell}$, 
where $z_{1},\ldots,z_{\ell}\in \mathbb{Z}$, and $\mathbf{B}=[\mathbf{b}_{1},\ldots,%
\mathbf{b}_{\ell}]$ is referred to as a generator matrix of $\Lambda$. The volume (or determinant) of a lattice $\Lambda$ is $\mathsf{Vol}(\Lambda)=|\mathsf{det}(\mathbf{B})|$. For any $\mathbf{x}\in \mathbb{R}^{\ell}$, the closest vector in $\Lambda$ to $\mathbf{x}$ is denoted as $\mathsf{Q}_{\Lambda}(\mathbf{x})$. The \emph{Voronoi region} of $\Lambda$, denoted by $\mathcal{V}(\Lambda)$, is the set of all points in $\mathbb{R}^{\ell}$ which are closest to the origin than to any other lattice
point.  The \emph{fundamental Voronoi region} of $\Lambda$, denoted by $\mathcal{V}_0(\Lambda)$, is the set of points that are closer to $0$ lattice point than to any other lattice point. The modulo $\Lambda$ operation is denoted as $\mathbf{x} \bmod \Lambda = \mathbf{x} - \mathsf{Q}_{\Lambda}(\mathbf{x})$. We consider product lattices, which are the Cartesian
product of two or more lower-dimensional lattices $\Lambda=\Lambda_1 \times \cdots \times \Lambda_t$. If $\Lambda_1 = \cdots = \Lambda_t$, we write $\Lambda=\Lambda_1^t$. Given the lattices $ \Lambda_3  \subseteq \Lambda_2  \subseteq \Lambda_1$ and $\mathbf{y}\in \mathbb{R}^{\ell}$, we define the lattice quantizers by \cite{MLWEE82021}:
\begin{align}
\mathsf{HelpRec}(\mathbf{x})&=\mathsf{Q}_{\Lambda_1}(\mathbf{x}) \bmod \Lambda_2,\nonumber\\
\mathsf{Rec}(\mathbf{x},\mathbf{y})&=\mathsf{Q}_{\Lambda_2}(\mathbf{x}-\mathbf{y}) \bmod \Lambda_3.\label{LatticeQuantizer}
\end{align}

\emph{Compress and Decompress ($1$-Dimensional Lattice Quantizer):} Let $x\in\mathbb{R}$ be a real number, then $\left\lceil x\right\rfloor $ means rounding to the closet integer with ties rounded up. The operations $\left\lfloor x\right\rfloor $ and $\left\lceil x \right\rceil$ mean rounding $x$ down and up, respectively. Let $x \in \mathbb{Z}_{q}$ and $d \in \mathbb{Z}$ be such that $2^d<q$. We define compression and decompression functions by \cite{NISTpqcdraft2023}:
{\setlength{\abovedisplayskip}{3pt} \setlength{\belowdisplayskip}{3pt}
\begin{align}
\mathsf{Compress}_{q}(x,d)&=\lceil (2^{d}/q)\cdot x\rfloor \mod 2^{d},\nonumber\\
\mathsf{Decompress}_{q}(x,d)&=\lceil (q/2^{d})\cdot x\rfloor.\label{ComDecom}
\end{align}}


\emph{Decryption Failure Rate (DFR) and Ciphertext Expansion Rate (CER):}
We let $\text{DFR} =\delta := \Pr (\hat{m} \neq m)$, where $m$ is a shared secret. It is desirable to have a small $\delta$, in order to be safe against decryption failure attacks \cite{DFRAttack2019}. In this work, the communication cost refers to the ciphertext expansion rate (CER), i.e., the ratio of the ciphertext size to the plaintext size.

\subsection{Kyber KEM}
Let $\mathcal{M}_{2,n} = \{0,1\}^{n}$ denote the
message space, where every message $m \in \mathcal{M}_{2,n}$ can be
viewed as a polynomial in $R$ with coefficients in $\{0,1\}$. Consider the public-key encryption scheme Kyber.CPA = (KeyGen; Enc; Dec) as described in Algorithms 2.2.1 to 2.2.3 \cite{Kyber2021}. The values of $\delta$, CER, and $(q, k, \eta_1,\eta_2, d_u, d_v)$ are given in Table \ref{Kyber_Par}. Note that the parameters $(q, k, \eta_1,\eta_2)$ determine the security level of Kyber, while $(d_u, d_v)$ decide ciphertext compression rate, or equivalently, the communication overhead.
\vspace{-3mm}
\begin{algorithm}[H]
\caption{$\mathsf{Kyber.CPA.KeyGen()}$: key generation}
\label{alg:kyber_keygen}
\begin{algorithmic}[1]

    \State
    $\rho,\sigma\leftarrow\left\{ 0,1\right\} ^{256}$

    \State
    $\bA\sim R_{q}^{k\times k}\coloneqq\mathsf{Sam}(\rho)$

    \State
    $(\bs,\be)\sim\beta_{\eta_1}^{k}\times\beta_{\eta_1}^{k}\coloneqq\mathsf{Sam}(\sigma)$

    \State
    $\bt\coloneqq\boldsymbol{\mathrm{As+e}}$\label{line:t}

    \State \Return $\left(pk\coloneqq(\boldsymbol{\mathrm{t}},\rho),sk\coloneqq\bs\right)$  

\end{algorithmic}
\end{algorithm}

\vspace{-5mm}

\begin{algorithm}[H]
\caption{$\mathsf{Kyber.CPA.Enc}$ $(pk=(\boldsymbol{\mathrm{t}},\rho),m\in\mathcal{M}_{2,n})$}
\label{alg:kyber_enc}
\begin{algorithmic}[1]

	\State
	$r \leftarrow \{0,1\}^{256}$
	
	
	\State
	$\boldsymbol{\mathrm{A}}\sim R_{q}^{k\times k}\coloneqq\mathsf{Sam}(\rho)$
	
	\State  $(\boldsymbol{\mathrm{r}},\boldsymbol{\mathrm{e}_{1}},e_{2})\sim\beta_{\eta_1}^{k}\times\beta_{\eta_2}^{k}\times\beta_{\eta_2}\coloneqq\mathsf{Sam}(r)$
	
	\State  $\boldsymbol{\mathrm{u}}\coloneqq\mathsf{Compress}_{q}(\buone,d_{u})$\label{line:u}
	
	\State  $v\coloneqq\mathsf{Compress}_{q}(\boldsymbol{\mathrm{t}}^{T}\boldsymbol{\mathrm{r}}+e_2+\left\lceil {q}/{2}\right\rfloor \cdot m,d_{v})$\label{line:v}
	
	\State \Return $c\coloneqq(\boldsymbol{\mathrm{u}},v)$

\end{algorithmic}
\end{algorithm}

\vspace{-5mm}

\begin{algorithm}[H]
\caption{${\mathsf{Kyber.CPA.Dec}}\ensuremath{(sk=\bs,c=(\bu,v))}$}
\begin{algorithmic}[1]

    \State
    $\bu\coloneqq\mathsf{Decompress}_{q}(\bu,d_{u})$

    \State
    $v\coloneqq\mathsf{Decompress}_{q}(v,d_{v})$

    \State \Return $\mathsf{Compress}_{q}(v-\bs^{T}\bu,1)$

\end{algorithmic}
\end{algorithm}

\vspace{-5mm}

\begin{table}[ht]
\caption{Parameters: Kyber in \cite{NISTpqcdraft2023}\cite{Kyber2021} vs. KRM-$\mathsf{E}_8$ in \cite{MLWEE82021}}
\label{Kyber_Par}\centering
\vspace{-3mm}
\begin{tabular}{|c|c|c|c|c|c|c|c|c|c|}
\hline
& $k$ & $q$ & $\eta_{1}$ & $\eta_{2}$ & $d_{u}$ & $d_{v}$ & $\delta$ & CER\\ \hline
KYBER512 &  $2$ & $3329$ & $3$ & $2$ & $10$ & $4$ & $2^{-139}$ & $24$\\ \hline
KYBER768 & $3$ & $3329$ & $2$ & $2$ & $10$ & $4$ & $2^{-164}$ &$34$\\ \hline
KYBER1024 & $4$ & $3329$ & $2$ & $2$ & $11$ & $5$ & $2^{-174}$ &$49$ \\ \hline
KRM-$\mathsf{E}_8$ & $3$ & $2048$ & $2$ & $2$ & $11$ & $4$ & $2^{-174}$ &$37$ \\ \hline
\end{tabular}
\vspace{-2mm}
\end{table}

\begin{table}[ht]
\caption{Security: Kyber in \cite{NISTpqcdraft2023}\cite{Kyber2021} vs. KRM-$\mathsf{E}_8$ in \cite{MLWEE82021}}
\label{Kyber_Security}\centering
\vspace{-3mm} 
\begin{tabular}{|c|c|c|c|c|}
\hline
& KYBER512 & KYBER768  & KYBER1024 & KRM-$\mathsf{E}_8$\\ \hline
NIST level & $1$ & $3$ & $5$ & $3$\\ \hline
Classical &$118$& $183$ & $256$  & $194$\\ \hline
Quantum & $107$ & $166$ & $232$ &  $176$\\ \hline
\end{tabular}
\vspace{-5mm}
\end{table}

\subsection{M-LWE-based KRM with $\mathsf{E_8}$ Lattice (KRM-$\mathsf{E_8}$)}
The KRM based key agreement approach in \cite{MLWEE82021} uses $\mathsf{E_8}$ lattice based quantizer. For an integer $p>1$, we choose
\begin{equation}
\setlength{\abovedisplayskip}{3pt}
\setlength{\belowdisplayskip}{3pt}
\Lambda_1=(q/2^{p} \cdot   \mathsf{E_8})^{32}, \ \Lambda_2 =2^{p-1}\Lambda_1, \ \Lambda_3 = q (\mathbb{Z}^8)^{32},\label{NestE8}
\end{equation}
Considering the  KRM-$\mathsf{E_8}$.CPA =
(KeyGen; Enc; Dec) as described in Algorithms 2.3.1 to 2.3.3 \cite{MLWEE82021}. 

\setcounter{algorithm}{0}

\vspace{-3mm}
\begin{algorithm}[H]
\caption{$\mathsf{KRM-{E}_8.CPA.KeyGen()}$: key generation}
\label{alg:KRM_keygen}
\begin{algorithmic}[1]

    \State
    $\rho,\sigma\leftarrow\left\{ 0,1\right\} ^{256}$

    \State
    $\bA\sim R_{q}^{k\times k}\coloneqq\mathsf{Sam}(\rho)$

    \State
    $(\bs,\be)\sim\beta_{\eta_1}^{k}\times\beta_{\eta_1}^{k}\coloneqq\mathsf{Sam}(\sigma)$

    \State
    $\bt\coloneqq\boldsymbol{\mathrm{As+e}}$\label{line:t2}

    \State \Return $\left(pk\coloneqq(\boldsymbol{\mathrm{t}},\rho),sk\coloneqq\bs\right)$  

\end{algorithmic}
\end{algorithm}

\vspace{-5mm}

\begin{algorithm}[H]
\caption{$\mathsf{KRM-{E}_8.CPA.Enc}$ $(pk=(\boldsymbol{\mathrm{t}},\rho)$}
\label{alg:KRM_enc}
\begin{algorithmic}[1]

	\State
	$r \leftarrow \{0,1\}^{256}$
	
	
	\State
	$\boldsymbol{\mathrm{A}}\sim R_{q}^{k\times k}\coloneqq\mathsf{Sam}(\rho)$
	
	\State  $(\boldsymbol{\mathrm{r}},\boldsymbol{\mathrm{e}_{1}},e_{2})\sim\beta_{\eta_1}^{k}\times\beta_{\eta_2}^{k}\times\beta_{\eta_2}\coloneqq\mathsf{Sam}(r)$
	
	\State  $\boldsymbol{\mathrm{u}}\coloneqq\buone$\label{line:u2}
	
	\State  $v\coloneqq\mathsf{HelpRec}(\boldsymbol{\mathrm{t}}^{T}\boldsymbol{\mathrm{r}}+e_2)$\label{line:v2}

 \State $m\coloneqq \mathsf{Rec}( \boldsymbol{\mathrm{t}}^{T}\boldsymbol{\mathrm{r}}+e_2,v)$
	
	\State \Return $c\coloneqq(\boldsymbol{\mathrm{u}},v)$

\end{algorithmic}
\end{algorithm}

\vspace{-5mm}

\begin{algorithm}[H]
\caption{${\mathsf{KRM-{E}_8.CPA.Dec}}\ensuremath{(sk=\bs,c=(\bu,v))}$}
\begin{algorithmic}[1]

    \State \Return $m=\mathsf{Rec}(\bs^{T}\bu,v)$

\end{algorithmic}
\end{algorithm}

\vspace{-3mm}

Comparing Algorithms 2.2.1 - 2.2.3 with Algorithms 2.3.1 - 2.3.3, we see that the key generation functions are the same. The major difference lies in encryption: Kyber adds a secret $m$ to the M-LWE sample $\boldsymbol{\mathrm{t}}^{T}\boldsymbol{\mathrm{r}}+e_2$, while KRM-$\mathsf{E_8}$ computes a common secret $m$ by quantizing the M-LWE sample $\boldsymbol{\mathrm{t}}^{T}\boldsymbol{\mathrm{r}}+e_2$. 

The parameters of KRM-$\mathsf{E_8}$ are listed in Table \ref{Kyber_Par}. Using the same notation as in Kyber \cite{Kyber2021}, we have $(d_u=\log_2(q), d_v=p-1)$. Since KRM-$\mathsf{E_8}$ requires an even $q$, the Karatsuba algorithm is used to speed up polynomial multiplications. But it is slower than the NTT used in Kyber.

Table \ref{Kyber_Security} shows the security levels of KRM-$\mathsf{E_8}$ and Kyber. We see that KRM-$\mathsf{E_8}$ and KYBER768 are comparable. However, as shown in Table \ref{Kyber_Par}, the CER of KRM-$\mathsf{E_8}$ is slightly higher than KYBER768. One reason is that Kyber compresses both parts of the ciphertext $c$, i.e.,
\begin{equation}
 (\mathsf{Compress}_{q}(\buone,d_{u}),\mathsf{Compress}_{q}(\boldsymbol{\mathrm{t}}^{T}\boldsymbol{\mathrm{r}}+e_2+\left\lceil {q}/{2}\right\rfloor \cdot m,d_{v})
\end{equation}
 while KRM-$\mathsf{E_8}$ only quantizes the second part of $c$, i.e.,
\begin{align}
(\buone,\mathsf{HelpRec}(\boldsymbol{\mathrm{t}}^{T}\boldsymbol{\mathrm{r}}+e_2)).
\end{align}

Another issue is that an upper bound on DFR is evaluated numerically in \cite{MLWEE82021}. No closed-form expression is provided. It is unclear how the choice of lattices affect the DFR/CER.

In this work, we will develop a generic KRM framework, valid for any dimensional lattices and any $q$ without a dither. Our design will quantize both parts in $c$, so that the value of CER is minimized. We will explore the trade-off between DFR and CER, for a given set of security parameters.

\vspace{-0mm}
\section{The Proposed KRM Framework } 
\vspace{-0mm}

\subsection{Generic Lattice Quantizer} 
For an arbitrary $q$ and a $\ell-$dimensional lattice $\Lambda$ with an integer generator matrix $\mathbf{B}$, we consider the following lattices
\begin{align}
\Lambda_1 =\lfloor q/2^{{p}}\rfloor \Lambda^{n/\ell}, \ \Lambda_2 = 2^{{p} -{t}}\Lambda_1, \  \Lambda_3 =2^{{p}} \lfloor q/2^{{p}} \rfloor \mathbb{Z}^{n},  \label{prime_quantizer}
\end{align}
where the integer $t<p$ is selected to ensure $ \Lambda_3  \subseteq \Lambda_2  \subseteq \Lambda_1$. It is clear that $\Lambda_2  \subseteq \Lambda_1$, we then show how to find an integer sub-lattice of $\Lambda_2$, or equivalently, an integer sub-lattice of $\Lambda$.

\begin{lemma}
Considering the Smith Normal Form factorization (SNF) of a $\ell-$dimensional lattice $\Lambda$ with an integer generator matrix $\mathbf{B}$, denoted as $\mathbf{B}= \mathbf{U}\cdot \diag (\pi_1, \ldots, \pi_\ell) \cdot \mathbf{U}'$, where $\mathbf{U}, \mathbf{U}' \in \mathbb{Z}^{\ell \times \ell}$ are unimodular matrices. Let $\pi_{\mathsf{LCM}}$ be the least common multiplier of $\pi_1, \ldots, \pi_\ell$. We have
\begin{equation}
\setlength{\abovedisplayskip}{3pt}
\setlength{\belowdisplayskip}{3pt}
\pi_{\mathsf{LCM}} \mathbb{Z}^{\ell} \subseteq \Lambda.
\end{equation}
\end{lemma}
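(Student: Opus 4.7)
The plan is to show that every standard basis vector $\mathbf{e}_i \in \mathbb{Z}^\ell$, scaled by $\pi_{\mathsf{LCM}}$, lies in $\Lambda$; by additivity this gives $\pi_{\mathsf{LCM}} \mathbb{Z}^\ell \subseteq \Lambda$. Since $\Lambda = \{\mathbf{B}\mathbf{z} : \mathbf{z} \in \mathbb{Z}^\ell\}$, it suffices to exhibit, for each $i$, an integer vector $\mathbf{z}_i \in \mathbb{Z}^\ell$ with $\mathbf{B}\mathbf{z}_i = \pi_{\mathsf{LCM}}\mathbf{e}_i$.

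First I would solve this equation formally. Using the SNF factorization $\mathbf{B} = \mathbf{U}\,\diag(\pi_1,\ldots,\pi_\ell)\,\mathbf{U}'$, I set
\begin{equation*}
\mathbf{z}_i \;=\; (\mathbf{U}')^{-1} \, \diag(\pi_1,\ldots,\pi_\ell)^{-1} \, \mathbf{U}^{-1} \, \pi_{\mathsf{LCM}}\mathbf{e}_i .
\end{equation*}
The next step is to argue that $\mathbf{z}_i$ is an integer vector, which requires three observations carried out in order: (i) because $\mathbf{U}$ is unimodular, $\mathbf{U}^{-1} \in \mathbb{Z}^{\ell \times \ell}$, so $\mathbf{y}_i := \mathbf{U}^{-1}\pi_{\mathsf{LCM}}\mathbf{e}_i$ is an integer vector whose every entry is an integer multiple of $\pi_{\mathsf{LCM}}$; (ii) since $\pi_j \mid \pi_{\mathsf{LCM}}$ for every $j$, the $j$-th entry of $\diag(\pi_1,\ldots,\pi_\ell)^{-1} \mathbf{y}_i$ equals $(\mathbf{y}_i)_j/\pi_j$, which is an integer; (iii) multiplying the resulting integer vector by the unimodular $(\mathbf{U}')^{-1}$ keeps it integral. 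Hence $\mathbf{z}_i \in \mathbb{Z}^\ell$ and $\pi_{\mathsf{LCM}}\mathbf{e}_i = \mathbf{B}\mathbf{z}_i \in \Lambda$.

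Finally, I would close the argument by noting that the set $\{\pi_{\mathsf{LCM}}\mathbf{e}_1,\ldots,\pi_{\mathsf{LCM}}\mathbf{e}_\ell\}$ generates $\pi_{\mathsf{LCM}}\mathbb{Z}^\ell$ over $\mathbb{Z}$, and $\Lambda$ is closed under integer linear combinations, so the inclusion $\pi_{\mathsf{LCM}}\mathbb{Z}^\ell \subseteq \Lambda$ follows immediately. There is no substantial obstacle: the argument is essentially a bookkeeping exercise on the SNF. The only point that needs a moment's care is step (ii) above, namely checking that the divisibility $\pi_j \mid \pi_{\mathsf{LCM}}$ is exactly what is needed to cancel the denominators introduced by $\diag(\pi_1,\ldots,\pi_\ell)^{-1}$; this is precisely why the LCM (and not some larger integer such as the product $\prod_j \pi_j = |\det \mathbf{B}|$) is the natural quantity to appear in the statement.
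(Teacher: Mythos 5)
Your proof is correct and follows essentially the same route as the paper's: both arguments reduce the claim to showing that $\mathbf{B}^{-1}\pi_{\mathsf{LCM}}$ is an integer matrix, using the SNF to write it as $(\mathbf{U}')^{-1}\diag(\pi_{\mathsf{LCM}}/\pi_1,\ldots,\pi_{\mathsf{LCM}}/\pi_\ell)\,\mathbf{U}^{-1}$ and invoking unimodularity together with the divisibility $\pi_j \mid \pi_{\mathsf{LCM}}$. You merely carry this out entrywise on the standard basis vectors rather than as a single matrix identity, which is a presentational difference only.
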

\begin{proof}
Let $\mathbf{x} \in \mathbb{Z}^{\ell}$. We have $\pi_{\mathsf{LCM}}\mathbf{x} \in \pi_{\mathsf{LCM}}\mathbb{Z}^{\ell}$. To show  $\pi_{\mathsf{LCM}} \mathbb{Z}^{\ell} \subseteq \Lambda$, we need to prove $\mathbf{B}^{-1}\pi_{\mathsf{LCM}}\mathbf{x}$ is an integer vector. Since $\mathbf{B}^{-1}\pi_{\mathsf{LCM}} = \mathbf{U}'^{-1}\cdot \diag (\pi_\mathsf{LCM}/\pi_1, \ldots, \pi_\mathsf{LCM}/\pi_\ell) \cdot \mathbf{U}^{-1}$ is an integer matrix, the proof is completed.
\end{proof}

According to Lemma 1, the values of $t$ in (\ref{prime_quantizer}) is equal to
\begin{equation}
\setlength{\abovedisplayskip}{3pt}
\setlength{\belowdisplayskip}{3pt}
t=\log_2(\pi_{\mathsf{LCM}}). \label{value_t}
\end{equation}
To ensure $t$ is an integer, $\pi_{\mathsf{LCM}}$ is required to be a power of $2$. In other words, the determinant of $\Lambda$ is a power of $2$.
\begin{remark}
For the choice of $\Lambda$, we consider $\mathsf{E_8}$ lattice, Barnes–Wall lattice with $\ell=16$ (BW16), and Leech lattice with $\ell=24$ (Leech24)\cite{BK:Conway93}. They provide the optimal density in the corresponding dimensions and have a fixed-complexity quantizer \cite{FrodoCong2022}\cite{VD93Leech}. For BW16 and Leech24 lattices, we will scale the original generator matrix to an integer matrix $\mathbf{B} \in \mathbb{Z}^{\ell \times \ell}$. For $\mathsf{E_8}$ lattice, we consider the original basis as in \cite{MLWEE82021}, since $\lfloor q/2^{{p}}\rfloor$ is even in this work. The determinant of these lattices is a power of $2$.
\end{remark}

\begin{example}
One can easily calculate ${t}=1, 2, 3$ for $\Lambda= \mathsf{E_8}, \mathsf{BW16}, \mathsf{Leech24}$, respectively.
\end{example}

\subsection{Rejection Sampling} 
To ensure the security of a KRM based on (\ref{prime_quantizer}), we need to show the common secret $m$ in Algorithm 2.3.2 is uniformly distributed. Let $\hat{q}=2^{{p}} \lfloor q/2^{{p}} \rfloor$, we recall the below Lemma.

\begin{lemma}[\cite{MLWEE82021}]
If $x \in R_{\hat{q}}$ is uniformly random, then $m =
\mathsf{Rec}(x; v)$ is uniformly random, given $v = \mathsf{HelpRec}(x)$.
\end{lemma}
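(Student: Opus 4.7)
The plan is a symmetry argument: the uniform distribution of $x$ on $R_{\hat{q}}$ is invariant under shifts by $\Lambda_2$, and the pair $(\mathsf{HelpRec}(x),\mathsf{Rec}(x,v))$ transforms in a controlled way under such shifts. The key observation is that, under the integer-basis assumption of Remark~1, we have $\Lambda_2=2^{p-t}\lfloor q/2^{p}\rfloor\Lambda^{n/\ell}\subseteq\mathbb{Z}^{n}$ and $\Lambda_3=\hat{q}\mathbb{Z}^{n}$. Therefore $R_{\hat{q}}$, viewed coefficient-wise, is a complete set of coset representatives of $\mathbb{Z}^{n}/\Lambda_3$, and for every $a\in\Lambda_2$ the map $x\mapsto(x+a)\bmod\Lambda_3$ is a bijection of $R_{\hat{q}}$ that preserves the uniform distribution.

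Next I would verify that this $\Lambda_2$-action is equivariant with respect to both quantizer maps. Using the translation-equivariance of closest-point quantization, $Q_{\Lambda_i}(x+a)=Q_{\Lambda_i}(x)+a$ whenever $a\in\Lambda_i$, together with $a\in\Lambda_2\subseteq\Lambda_1$, I would obtain
\begin{align*}
\mathsf{HelpRec}(x+a) &= \bigl(Q_{\Lambda_1}(x)+a\bigr)\bmod\Lambda_2 = \mathsf{HelpRec}(x),\\
\mathsf{Rec}(x+a,v) &= \bigl(Q_{\Lambda_2}(x-v)+a\bigr)\bmod\Lambda_3 = \mathsf{Rec}(x,v)+\bigl(a\bmod\Lambda_3\bigr).
\end{align*}
So the help bits $v$ are invariant under the $\Lambda_2$-shift, while the message $m$ is translated by $a\bmod\Lambda_3$.

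To conclude, as $a$ ranges over $\Lambda_2$ the shifts $a\bmod\Lambda_3$ exhaust the quotient group $\Lambda_2/\Lambda_3$. Hence, conditional on any realization of $v$, the law of $m$ is invariant under the transitive action of $\Lambda_2/\Lambda_3$ on itself, which forces $m\mid v$ to be uniform on $\Lambda_2/\Lambda_3$.

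The delicate point to handle carefully is making sure the $\Lambda_2$-action is genuinely well-defined on the discrete set $R_{\hat{q}}$; this is exactly what Lemma~1 and the integer-basis assumption of Remark~1 guarantee, since they ensure $\Lambda_2\subseteq\mathbb{Z}^{n}$ and $\Lambda_3=\hat{q}\mathbb{Z}^{n}$. A secondary technicality is to fix a deterministic tie-breaking rule in $Q_{\Lambda_i}$ so that translation-equivariance holds exactly rather than only generically; this is a standard convention and does not affect the argument.
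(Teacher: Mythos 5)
Your argument is correct, and it is worth noting that the paper itself gives no proof of this lemma at all: it is imported verbatim from \cite{MLWEE82021} as a black box. What you have written is a self-contained reconstruction of the standard symmetry argument behind such reconciliation lemmas (going back to Peikert's KRM and NewHope), and all the essential ingredients are in place: $\Lambda_3=\hat{q}\mathbb{Z}^{n}$ so that $R_{\hat{q}}$ is a transversal of $\mathbb{Z}^{n}/\Lambda_3$; the shift action of $\Lambda_2/\Lambda_3$ is well defined and measure-preserving because $\Lambda_3\subseteq\Lambda_2\subseteq\mathbb{Z}^{n}$ (which the paper's integer-basis construction in (\ref{prime_quantizer}) and Lemma~1 guarantee); $\mathsf{HelpRec}$ is invariant under the action while $\mathsf{Rec}$ is equivariant; and transitivity of the action of $\Lambda_2/\Lambda_3$ on itself forces the conditional law of $m$ given $v$ to be uniform. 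Two small points of hygiene: first, your identity $\mathsf{Rec}(x+a,v)=\mathsf{Rec}(x,v)+(a\bmod\Lambda_3)$ holds as an equality of cosets in $\Lambda_2/\Lambda_3$ but not necessarily as an equality of the literal representatives produced by the map $y\mapsto y-\mathsf{Q}_{\Lambda_3}(y)$; since the shared secret is an element of the quotient group this is harmless, but you should say so explicitly. Second, when you shift a representative $x\in R_{\hat{q}}$ by $a$ and reduce modulo $\Lambda_3$ to land back in $R_{\hat{q}}$, the new representative is $x+a-\lambda$ for some $\lambda\in\Lambda_3$; your equivariance claims must be checked for $a-\lambda\in\Lambda_2$ rather than for $a$ itself, which works because $\Lambda_3\subseteq\Lambda_2$, and is worth one sentence. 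With those clarifications the proof is complete and arguably more informative than the paper's bare citation.
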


According to Algorithm 2.3.2, we have $x=\boldsymbol{\mathrm{t}}^{T}\boldsymbol{\mathrm{r}}+e_2 \in R_q$, which is uniformly random according to the M-LWE assumption. For an arbitrary $q$, since $q \geq \hat{q}$, we will apply rejection sampling on $x$ and select those $x \in R_{\hat{q}}$, which is still uniformly random. The rejection probability, denoted as $P_\mathbf{rej}$, is given by
\begin{equation}
P_\mathsf{rej}=1-(\hat{q}/q)^n \approx n(1-\hat{q}/q).
\end{equation}
The approximation holds when $n|1-\hat{q}/q| \ll 1 $. Through rejection sampling, we remove the constraint of even $q$ in \cite{MLWEE82021}.
\begin{example}
Given $q=3329$ used in Kyber, we have $\hat{q}=3328$ for $1<p<9$ and $P_\mathsf{rej} \approx 7.69\%$.
\end{example}

More details to be followed in the next subsection.

\subsection{The Proposed KRM-$\Lambda$}

Based on the results in the above subsections, we give here the proposed generic KRM framework, denoted as  KRM-$\Lambda$. The KeyGen function is the same as in Algorithm 2.3.1. The proposed Enc/Dec functions are described in Algorithms 3.3.2 and 3.3.3. To reduce CER, we also compress $\mathbf{u}$ as in Kyber.

\setcounter{algorithm}{0}

\vspace{-3mm}
\begin{algorithm}[H]
\caption{$\mathsf{KRM-\Lambda.CPA.KeyGen()}$: key generation}
\label{alg:KRM_L_keygen}
\begin{algorithmic}[1]

    \State
    $\rho,\sigma\leftarrow\left\{ 0,1\right\} ^{256}$

    \State
    $\bA\sim R_{q}^{k\times k}\coloneqq\mathsf{Sam}(\rho)$

    \State
    $(\bs,\be)\sim\beta_{\eta_1}^{k}\times\beta_{\eta_1}^{k}\coloneqq\mathsf{Sam}(\sigma)$

    \State
    $\bt\coloneqq\boldsymbol{\mathrm{As+e}}$\label{line:t3}

    \State \Return $\left(pk\coloneqq(\boldsymbol{\mathrm{t}},\rho),sk\coloneqq\bs\right)$  

\end{algorithmic}
\end{algorithm}

\vspace{-5mm}

\begin{algorithm}[H]
\caption{$\mathsf{KRM-\Lambda.CPA.Enc}$ $(pk=(\boldsymbol{\mathrm{t}},\rho)$}
\label{alg:KRM_L_enc}
\begin{algorithmic}[1]

	\State
	$r \leftarrow \{0,1\}^{256}$
	
	
	\State
	$\boldsymbol{\mathrm{A}}\sim R_{q}^{k\times k}\coloneqq\mathsf{Sam}(\rho)$
 
\If{$q \neq \hat{q}$}   \hfill{//rejection sampling}

 \While {$\max(\boldsymbol{\mathrm{t}}^{T}\boldsymbol{\mathrm{r}}+e_2)>\hat{q}-1$}
	
	\State  $(\boldsymbol{\mathrm{r}},\boldsymbol{\mathrm{e}_{1}},e_{2})\sim\beta_{\eta_1}^{k}\times\beta_{\eta_2}^{k}\times\beta_{\eta_2}\coloneqq\mathsf{Sam}(r)$
	

 \EndWhile                         \hfill{//select $\boldsymbol{\mathrm{t}}^{T}\boldsymbol{\mathrm{r}}+e_2 \in R_{\hat{q}}$}
 \EndIf

 \State  $\boldsymbol{\mathrm{u}}\coloneqq\mathsf{Compress}_{q}(\buone,d_{u})$
 
	\State  $v\coloneqq\mathsf{HelpRec}(\boldsymbol{\mathrm{t}}^{T}\boldsymbol{\mathrm{r}}+e_2)$\label{line:v3}

 \State $m\coloneqq \mathsf{Rec}( \boldsymbol{\mathrm{t}}^{T}\boldsymbol{\mathrm{r}}+e_2,v)$
	
	\State \Return $c\coloneqq(\boldsymbol{\mathrm{u}},v)$

\end{algorithmic}
\end{algorithm}

\vspace{-5mm}

\begin{algorithm}[H]
\caption{${\mathsf{KRM-\Lambda.CPA.Dec}}\ensuremath{(sk=\bs,c=(\bu,v))}$}
\begin{algorithmic}[1]

    \State
    $\bu'\coloneqq\mathsf{Decompress}_{q}(\bu,d_{u})$

    \State \Return $m=\mathsf{Rec}(\bs^{T}\bu',v)$

\end{algorithmic}
\end{algorithm}

\vspace{-3mm}

\begin{remark}
The Proposed KRM-$\Lambda$ framework compresses both parts of the ciphertext $c$, i.e,
\begin{equation}
\setlength{\abovedisplayskip}{3pt}
\setlength{\belowdisplayskip}{3pt}
 (\mathsf{Compress}_{q}(\buone,d_{u}),\mathsf{HelpRec}(\boldsymbol{\mathrm{t}}^{T}\boldsymbol{\mathrm{r}}+e_2)). \label{c_Lambda}
\end{equation}
With $\Lambda = \mathsf{E}_8$, $q=2048$, and $d_u=11$, the proposed scheme reduces to KRM-$\mathsf{E}_8$ \cite{MLWEE82021}. By tuning the values of $(q, d_u)$, we can generate different variants of KRM-$\mathsf{E}_8$, which could have smaller CER than the original version. Furthermore, when $q$ is prime, NTT can be applied to speed up the polynomial multiplications in Algorithms 3.3.1-3.3.3. More importantly, the proposed KRM-$\Lambda$ framework enables the use of better lattices than $\mathsf{E}_8$, e.g., BW16 and Leech24.
\end{remark}


The message space of $m$ is $\Lambda_2/\Lambda_3$, which can be represented by  $N$ bits. The value of $N$ is given by
\begin{align}
N&=\log_2(\mathsf{Vol}(\Lambda_3)/\mathsf{Vol}(\Lambda_2))=n/\ell {\textstyle\sum\nolimits}_{i=1}^{\ell}\log _{2}(\pi_{\mathsf{LCM}}/\pi_{i}), 
\end{align}%
where $\pi_{\mathsf{LCM}}$ and $\pi_{i}$ are defined in Lemma 1. In other words, the equivalent message space of $m$ is $\mathcal{M}_{2,N} = \{0,1\}^{N}$. Consequently, the CER of KRM-$\Lambda$ can be computed by
\begin{equation}
\setlength{\abovedisplayskip}{3pt}
\setlength{\belowdisplayskip}{3pt}
\text{CER}=\dfrac{knd_u+nd_v}{N}, 
\end{equation}%
where $d_v=p-t$ and the values of $t$ are given in Example 1.

\begin{example}
One can easily calculate ${N}=256$ and $320$ bits for $\Lambda= \mathsf{E_8}$ and $\mathsf{BW16}$, respectively. We see that a higher-dimensional $\Lambda$ allows exchanging a larger secret.
\end{example}

\vspace{-1mm}

\subsection{Security}
Along the same line in \cite{MLWEE82021},  we can use Lemma 2 to show the proposed KRM-$\Lambda$ is IND-CPA secure. Similarly to Kyber, if the DFR is small, we can obtain an IND-CCA secure  KRM-$\Lambda$ using th Fujisaki-Okamoto transform \cite{modFOT2017} applied to the IND-CPA secure KRM-$\Lambda$. In the next section, we will study the DFR of KRM-$\Lambda$.

\vspace{-1mm}
\section{The Analysis on DFR} 
In this section, we propose an explicit upper bound on the DFR of KRM-$\Lambda$. We show how the choice of $\Lambda$ affects DFR.
\subsection{KRM-$\Lambda$ Decoding Noise}
A sufficient condition for correct decryption is \cite{MLWEE82021}:
\begin{align}
\mathsf{Q}_{\Lambda_2}(\boldsymbol{\mathrm{t}}^{T}\boldsymbol{\mathrm{r}}+e_2-\bs^{T}\bu' + c_v) &=0, \label{dec_con}
\end{align}
where ${c}_v  \leftarrow \mathcal{U}(\mathcal{V}_0(\Lambda_1))$ is the quantization noise, and 
\begin{equation}
\bu'=\mathsf{Decompress}_{q}(\bu,d_{u}),
\end{equation}
is given in Algorithm 3.3.3. According to \cite{Kyber2021}, we have
\begin{equation}
\bu'=\buone + \mathbf{c}_u, \label{u_prime}
\end{equation}
where $\mathbf{c}_{u}\leftarrow \psi
_{d_{u}}^{k}$ is the rounding noise generated due to the compression operation in Algorithm 3.3.2. According to (\ref{dec_con}) and (\ref{u_prime}), the KRM-$\Lambda$ decoding noise is given by
\begin{align}
n_e &= \boldsymbol{\mathrm{t}}^{T}\boldsymbol{\mathrm{r}}+e_2-\bs^{T}\bu' + c_v \notag \\
&= \mathbf{e}^{T}\mathbf{r}+e_{2}+{c}_{v}-\mathbf{s}^{T}\left( \mathbf{e}%
_{1}+\mathbf{c}_{u}\right).  \label{q_noise_model}
\end{align}

Similar to Kyber, the elements in $c_v$ or $\mathbf{c}_u$ are assumed to be i.i.d. and independent of other terms in (\ref{q_noise_model}).

\begin{theo}
According to the Central Limit Theorem (CLT), the distribution of $n_e$ asymptotically approaches the sum of multivariate normal and uniform random variables:
{\setlength{\abovedisplayskip}{3pt} \setlength{\belowdisplayskip}{3pt}
\begin{equation}
n_{e}\leftarrow \mathcal{N}(0,\sigma _{G}^{2}I_{n}) +\mathcal{U}%
(\mathcal{V}_0(\Lambda_1)),  \label{L_D_ne}
\end{equation}%
}where $\sigma _{G}^{2}=kn\eta _{{1}}^2/4+ kn\eta_{_1}/2 \cdot (\eta_2/2+\var(\psi_{d_u}))+\eta_2/2$. The values of $\var(\psi_{d_u})$ are listed in Table \ref{Du_Kyber_Var}.
\end{theo}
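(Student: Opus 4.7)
The plan is to verify Theorem 1 coefficient-by-coefficient and then invoke a multivariate Central Limit Theorem. First I would rewrite the noise of equation (\ref{q_noise_model}) as $n_e = n_G + c_v$, where $n_G \coloneqq \mathbf{e}^{T}\mathbf{r}+e_{2}-\mathbf{s}^{T}(\mathbf{e}_{1}+\mathbf{c}_{u})$ collects all discrete terms and $c_v$ is the (assumed independent) quantization noise uniform on $\mathcal{V}_0(\Lambda_1)$. Because the distribution of $c_v$ is assumed independent of everything else, the claim reduces to showing $n_G \to \mathcal{N}(0,\sigma_G^2 I_n)$ in distribution and then convolving with the uniform law. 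Pass to one fixed coefficient $n_G^{(\ell)}$ of $n_G$: expanding the polynomial products in the ring $\mathbb{Z}[X]/(X^n+1)$, this coefficient is an explicit sum of $2kn+1$ terms — each one either a product of two independent centered-binomial variables (from $e_ir_i$ or $s_i e_{1,i}$), or a product of a binomial and a compression-noise variable $\psi_{d_u}$ (from $s_i c_{u,i}$), plus the single term $e_2$.

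Next I would compute $\var(n_G^{(\ell)})$ by linearity, using $\var(\beta_{\eta})=\eta/2$ and the mutual independence of all secret/noise variables. The $kn$ products contributing to $\mathbf{e}^T\mathbf{r}$ each contribute $\eta_1^2/4$; the $kn$ products in $\mathbf{s}^T\mathbf{e}_1$ each contribute $\eta_1\eta_2/4$; the $kn$ products in $\mathbf{s}^T\mathbf{c}_u$ each contribute $(\eta_1/2)\var(\psi_{d_u})$; and $e_2$ contributes $\eta_2/2$. Summing reproduces exactly $\sigma_G^2=kn\eta_1^2/4+kn\eta_1/2\cdot(\eta_2/2+\var(\psi_{d_u}))+\eta_2/2$. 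Zero mean is immediate because every central binomial (and $\psi_{d_u}$) is centered and independent of its partner. Applying the classical CLT to this triangular-array sum, whose $O(kn)$ summands are bounded and independent, gives $n_G^{(\ell)} \to \mathcal{N}(0,\sigma_G^2)$ as $kn\to\infty$.

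To upgrade the marginal CLT to the joint statement $n_G \to \mathcal{N}(0,\sigma_G^2 I_n)$ I would use the Cramér–Wold device together with an off-diagonal covariance calculation: for $\ell\neq\ell'$, pair up the convolution summands in $n_G^{(\ell)}$ and $n_G^{(\ell')}$ and note that because $\mathbb{E}[\beta_\eta]=\mathbb{E}[\psi_{d_u}]=0$ and the underlying secrets are independent, each cross term has zero expectation, so $\mathrm{Cov}(n_G^{(\ell)},n_G^{(\ell')})=0$. Hence any linear combination $\sum_\ell \alpha_\ell n_G^{(\ell)}$ is again a sum of independent bounded terms with the right variance, and the scalar CLT yields joint asymptotic normality with scalar covariance $\sigma_G^2 I_n$. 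Convolving with the independent $\mathcal{U}(\mathcal{V}_0(\Lambda_1))$ contribution of $c_v$ finishes the proof of (\ref{L_D_ne}). The main obstacle I anticipate is the joint independence argument across the $n$ coefficients, since the same secrets appear in every coefficient through the convolution structure of $R_q$; the cleanest way to dispose of this is the zero-cross-covariance calculation above, which relies crucially on the symmetry (mean zero) of $\beta_\eta$ and $\psi_{d_u}$ and on the independence assumption on $\mathbf{c}_u$ stated just before the theorem.
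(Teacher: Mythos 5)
The paper itself gives no proof of Theorem~1 (it is ``similar to Theorem~1 in \cite{liu2023lattice}, thus omitted''), so your proposal is best read as filling in the omitted argument, and it does follow the intended route: split $n_e$ into the discrete part $n_G$ and the independent quantization noise $c_v$, compute the per-coefficient variance of $n_G$ term by term, invoke the CLT, and convolve with $\mathcal{U}(\mathcal{V}_0(\Lambda_1))$. Your variance bookkeeping is correct and reproduces $\sigma_G^2$ exactly: $kn$ products of two $\beta_{\eta_1}$'s contribute $kn\eta_1^2/4$, the $kn$ products in $\mathbf{s}^T\mathbf{e}_1$ contribute $kn(\eta_1/2)(\eta_2/2)$, the $kn$ products in $\mathbf{s}^T\mathbf{c}_u$ contribute $kn(\eta_1/2)\var(\psi_{d_u})$, and $e_2$ adds $\eta_2/2$. (Minor slip: a coefficient of $n_G$ is a sum of $3kn+1$ elementary products, not $2kn+1$, unless you group $\mathbf{e}_1+\mathbf{c}_u$ first; and $n_G^{(\ell)}$ collides with the paper's use of that superscript for an $\ell$-dimensional block rather than a single coefficient.)

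The one substantive weak point is your repeated claim that the summands are \emph{independent}. They are not: the convolution structure of $R_q$ means a single secret coefficient $e_i$ (or $s_i$) appears in many of the products $\pm e_i r_{j-i}$ within one output coefficient, and the same secrets reappear across all $n$ coefficients. These products are pairwise uncorrelated (because $\beta_\eta$ and $\psi_{d_u}$ are centered), which is exactly what makes your variance and cross-covariance calculations go through, but the classical CLT for independent triangular arrays does not literally apply, and zero cross-covariance plus marginal normality does not by itself give joint normality via Cram\'er--Wold, since a linear combination $\sum_\ell \alpha_\ell n_G^{(\ell)}$ is again a quadratic form in shared variables. To make this rigorous you would need a CLT for weakly dependent (uncorrelated, bounded) arrays or a quantitative argument; in practice the Kyber/DFR literature, including the reference the paper defers to, treats this step as a heuristic, and the theorem's own phrasing (``asymptotically approaches'') reflects that. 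So your proof matches the paper's standard of rigor, but you should state the independence caveat rather than assert independence.
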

\begin{proof}
Proof is similar to Theorem 1 in \cite{liu2023lattice}, thus omitted.
\end{proof}

\vspace{-5mm}
\begin{table}[ht]
\caption{Values of $\var(\psi_{d_u})$ with Kyber Modulus $q=3329$}
\label{Du_Kyber_Var}\centering
\vspace{-3mm} 
\begin{tabular}{|c|c|c|c|c|}
\hline
$d_u$ & $11$ & $10$  & $9$ \\ \hline
$\var(\psi_{d_u})$  &$0.38$ \cite{liu2023lattice} & $0.9$ \cite{Kyber2021} & $3.8$ \cite{liu2023lattice} \\ \hline
\end{tabular}
\vspace{-3mm}
\end{table}

\subsection{Deriving the DFR} 
Combining (\ref{dec_con}) and (\ref{q_noise_model}), The DFR of KRM-$\Lambda$ is given by
\begin{equation}
\setlength{\abovedisplayskip}{3pt}
\setlength{\belowdisplayskip}{3pt}
\delta = \Pr(n_e \notin \mathcal{V}_0(\Lambda_2)).
\end{equation}
Since $\Lambda_2 = 2^{{p} -{t}}\lfloor q/2^{{p}}\rfloor (\Lambda)^{n/\ell}$, according to Theorem 1, we can divide $n_e$ into $n/\ell$ blocks, which are independently and identically distributed. Without loss of generality, we let $n_{e}^{(\ell)}:=[n_{e,1},n_{e,2},\ldots,n_{e,\ell}]^{T}$
to be the $\ell$ coefficients in one block. We can rewrite $\delta$ as
\begin{align}
\delta &= 1-\Pr(n_e^{(\ell)} \in \mathcal{V}_0(2^{{p} -{t}}\lfloor q/2^{{p}}\rfloor\Lambda))^{n/\ell} \notag \\
& \leq 1- \Pr(\Vert n_e^{(\ell)} \Vert \leq \mathsf{r_{pack}}(2^{{p} -{t}}\lfloor q/2^{{p}}\rfloor\Lambda))^{n/\ell}, \label{delta_UB}
\end{align}
where $\mathsf{r_{pack}}(\Lambda)$ is the packing radius of a lattice $\Lambda$.

\begin{lemma}
$\Pr \left( \Vert {n}_{e}^{(\ell)}\Vert \leq z\right) \geq 1 - Q_{\ell/2}\left(
\frac{\mathsf{r_{cov}}(\lfloor q/2^{{p}}\rfloor\Lambda)}{\sigma _{G}},\frac{z}{%
\sigma _{G}}\right)$, for $z \geq 0$, where $%
Q_{M}\left(a,b\right) $ is the generalised Marcum Q-function, and $\mathsf{r_{cov}}(\Lambda)$ is the covering radius of a lattice $\Lambda$.
\end{lemma}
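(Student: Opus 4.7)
The plan is to isolate the Gaussian and the uniform components of a single block of ${n}_{e}^{(\ell)}$, condition on the uniform part, recognize the conditional distribution of the norm as a (scaled) noncentral chi distribution, and finally use the monotonicity of the generalized Marcum Q-function together with the covering-radius bound.

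First, I would restrict Theorem~1 to a single $\ell$-dimensional block. Since $\Lambda_1=\lfloor q/2^{p}\rfloor\Lambda^{n/\ell}$ is a product lattice, $\mathcal{V}_{0}(\Lambda_{1})$ factorizes into $n/\ell$ copies of $\mathcal{V}_{0}(\lfloor q/2^{p}\rfloor\Lambda)$. Hence I can write
\begin{equation*}
n_{e}^{(\ell)}=\mathbf{g}+\mathbf{c},\qquad \mathbf{g}\sim\mathcal{N}(0,\sigma_{G}^{2}I_{\ell}),\qquad \mathbf{c}\sim\mathcal{U}\!\bigl(\mathcal{V}_{0}(\lfloor q/2^{p}\rfloor\Lambda)\bigr),
\end{equation*}
with $\mathbf{g}$ and $\mathbf{c}$ independent. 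By definition of the covering radius, $\|\mathbf{c}\|\le \mathsf{r_{cov}}(\lfloor q/2^{p}\rfloor\Lambda)$ almost surely.

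Next I would condition on $\mathbf{c}$. Given $\mathbf{c}$, the vector $n_{e}^{(\ell)}/\sigma_{G}$ is Gaussian with mean $\mathbf{c}/\sigma_{G}$ and identity covariance, so $\|n_{e}^{(\ell)}\|/\sigma_{G}$ follows a noncentral chi distribution with $\ell$ degrees of freedom and noncentrality parameter $\|\mathbf{c}\|/\sigma_{G}$. The standard identification of the complementary CDF of the noncentral chi distribution with the generalized Marcum Q-function then yields
\begin{equation*}
\Pr\!\left(\|n_{e}^{(\ell)}\|\le z \,\big|\, \mathbf{c}\right)=1-Q_{\ell/2}\!\left(\frac{\|\mathbf{c}\|}{\sigma_{G}},\frac{z}{\sigma_{G}}\right).
\end{equation*}

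To finish, I would invoke the well-known monotonicity property that $Q_{M}(a,b)$ is nondecreasing in $a\ge 0$ for fixed $b\ge 0$ and $M\ge 1/2$. Applying this to $a=\|\mathbf{c}\|/\sigma_{G}\le \mathsf{r_{cov}}(\lfloor q/2^{p}\rfloor\Lambda)/\sigma_{G}$ gives
\begin{equation*}
1-Q_{\ell/2}\!\left(\frac{\|\mathbf{c}\|}{\sigma_{G}},\frac{z}{\sigma_{G}}\right)\ge 1-Q_{\ell/2}\!\left(\frac{\mathsf{r_{cov}}(\lfloor q/2^{p}\rfloor\Lambda)}{\sigma_{G}},\frac{z}{\sigma_{G}}\right),
\end{equation*}
and taking expectation over $\mathbf{c}$ preserves the bound since the right-hand side no longer depends on $\mathbf{c}$. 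This proves the lemma.

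The main obstacle is the monotonicity of $Q_{M}(a,b)$ in $a$: this is a classical fact, but it needs to be cited carefully (e.g., via the series representation or the integral expression involving the modified Bessel function $I_{M-1}$) because the bound becomes vacuous without it. Everything else is a straightforward conditioning argument and an application of the covering-radius definition.
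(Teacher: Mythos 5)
Your proposal is correct and follows essentially the same route as the paper's proof: decompose the block into an independent Gaussian plus a uniform Voronoi-region component, condition on the uniform part to identify the norm's conditional law as a noncentral chi distribution (hence the $Q_{\ell/2}$ complementary CDF), bound the noncentrality by the covering radius via the monotonicity of $Q_M(a,b)$ in $a$, and average out the conditioning. The only cosmetic difference is that you take a clean expectation over $\mathbf{c}$ while the paper writes the same step as a sum over a finite sample space of $\mathbf{y}$; your phrasing is, if anything, tidier.
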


\begin{proof}
Given Theorem 1, we can write
$n_{e,i}=x_{i}+y_{i}$, where $x_{i}\leftarrow \mathcal{N}( 0,\sigma
_{G}^{2}) $ and $\mathbf{y}=[y_{1},\ldots, y_{\ell}]^T \leftarrow \mathcal{U}(\mathcal{V}_0(\lfloor q/2^{{p}}\rfloor\Lambda))$. Let $N_U$ be the sample space size of $\mathbf{y}$.
\begin{eqnarray*}
&&\Pr \left( \Vert n_{e}^{(\ell)}\Vert \leq z\right)  \\
&=&\Pr \left( \sqrt{{\textstyle\sum\nolimits}_{i=1}^{\ell}\left( x_{i}+y_{i}\right) ^{2}}%
\leq z\right)  \\
&=&{\textstyle\sum\nolimits}_{j=1}^{N_{U}}\Pr \left( \sqrt{{\textstyle\sum\nolimits}_{i=1}^{\ell}%
\left( x_{i}+\mu _{j,i}\right) ^{2}}\leq z\left\vert \mathbf{y}=\mathbf{\mu }%
_{j}\right. \right) \Pr \left( \mathbf{y}=\mathbf{\mu }_{j}\right)  \\
&=&\frac{1}{N_{U}}{\textstyle\sum\nolimits}_{j=1}^{N_{U}}\Pr \left( \sqrt{%
{\textstyle\sum\nolimits}_{i=1}^{\ell}\left( x_{i}+\mu _{j,i}\right) ^{2}}\leq z\left\vert
\mathbf{y}=\mathbf{\mu }_{j}\right. \right),
\end{eqnarray*}
where $\mathbf{\mu }_{j}=[\mu _{j,1},\mu _{j,2},\ldots ,\mu _{j,\ell}]^{T}$ is a
sample point in the sample space of $\mathbf{y}$.
Given $y_{i}=\mu _{j,i}$, $\sqrt{{\textstyle\sum\nolimits}_{i=1}^{\ell}\left( x_{i}+\mu
_{j,i}\right) ^{2}}$ follows non-central chi distribution, i.e.,%
\begin{align*}
&\Pr \left( \sqrt{{{\textstyle\sum\nolimits}_{i=1}^{\ell}\left( x_{i}+\mu_{j,i}\right) ^{2}%
}/{\sigma _{G}^{2}}}\leq {z}/{\sigma _{G}}\left\vert \mathbf{y}_{j}=%
\mathbf{\mu }_{j}\right. \right)  \notag \\
&=1-Q_{\ell/2}\left( \sqrt{{%
{\textstyle\sum\nolimits}_{i=1}^{\ell}\mu _{j,i}^{2}}/{\sigma _{G}^{2}}},{z}/{\sigma
_{G}}\right),
\end{align*}%
where $Q_{M}\left( a,b\right) $ is the generalized Marcum Q-function. Since $%
Q_{M}\left( a,b\right) $ is strictly increasing in $a$ for all $a\geqslant 0$%
, we have%
\begin{align*}
\Pr \left( \Vert n_{e}^{(l)}\Vert \leq z\right)  &= 1-\frac{1}{N_{U}}%
{\textstyle\sum\nolimits}_{j=1}^{N_{U}}
Q_{\ell/2}\left( \sqrt{{%
{\textstyle\sum\nolimits}_{i=1}^{\ell}\mu _{j,i}^{2}}/{\sigma _{G}^{2}}},{z}/{\sigma
_{G}}\right)  \\
&\geqslant 1-Q_{\ell/2}\left( {\mathsf{r_{cov}}(\lfloor q/2^{{p}}\rfloor\Lambda)/\sigma_{G} 
},{z}/{\sigma _{G}}\right).
\end{align*}
\end{proof}
Given Lemma 3 and (\ref{delta_UB}), we have the following theorem.
\begin{theo}
The DFR of the proposed KRM-$\Lambda$ scheme is upper bounded by
\begin{align}~\label{eq:dfrpackcov}
\delta & \leq 1-\left(1 - Q_{\ell/2}\left(
\frac{\mathsf{r_{cov}}(\lfloor q/2^{{p}}\rfloor\Lambda)}{\sigma _{G}},\frac{\mathsf{r_{pack}}(2^{{p} -{t}}\lfloor q/2^{{p}}\rfloor\Lambda)}{%
\sigma_{G}}\right)\right)^{n/\ell}, 
\end{align}
where $\sigma_{G}$ is given in Theorem 1, and $t$ is given in (\ref{value_t}). The values of $(\mathsf{r_{pack}}, \mathsf{r_\mathsf{cov}})$ are given in Table \ref{latttice_para}.
\end{theo}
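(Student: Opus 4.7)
The plan is to chain together the decryption-success condition \eqref{dec_con}, the product structure of $\Lambda_2$, the packing-radius inclusion $\mathcal{B}(\mathbf{0},\mathsf{r_{pack}}(\Lambda_2))\subseteq\mathcal{V}_0(\Lambda_2)$, and Lemma~3. First I would restate the DFR as $\delta = \Pr(n_e \notin \mathcal{V}_0(\Lambda_2))$ using \eqref{dec_con} together with the decoding-noise expression \eqref{q_noise_model}; this already reduces the problem to a purely probabilistic statement about the random vector $n_e$ with the distribution identified in Theorem~1.

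Next I would exploit the product-lattice factorisation $\Lambda_2 = 2^{p-t}\lfloor q/2^p\rfloor\,\Lambda^{n/\ell}$. Since $n_e$ has i.i.d.\ coordinates per Theorem~1 and the Voronoi region of a Cartesian product is the Cartesian product of the Voronoi regions, the event $\{n_e\in\mathcal{V}_0(\Lambda_2)\}$ splits as the intersection of $n/\ell$ independent identically distributed block events, giving
\begin{equation*}
\delta \;=\; 1-\Pr\!\bigl(n_e^{(\ell)}\in\mathcal{V}_0\bigl(2^{p-t}\lfloor q/2^p\rfloor\Lambda\bigr)\bigr)^{n/\ell}.
\end{equation*}
I would then relax the Voronoi-region membership to the (strictly smaller) ball of radius equal to the packing radius, using the standard inclusion $\mathcal{B}(\mathbf{0},\mathsf{r_{pack}}(\Lambda'))\subseteq\mathcal{V}_0(\Lambda')$ for any lattice $\Lambda'$. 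This yields
\begin{equation*}
\delta \;\leq\; 1-\Pr\!\bigl(\|n_e^{(\ell)}\| \leq \mathsf{r_{pack}}(2^{p-t}\lfloor q/2^p\rfloor\Lambda)\bigr)^{n/\ell},
\end{equation*}
which is exactly \eqref{delta_UB}.

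The last step is to invoke Lemma~3 with $z=\mathsf{r_{pack}}(2^{p-t}\lfloor q/2^p\rfloor\Lambda)$: this bounds $\Pr(\|n_e^{(\ell)}\|\leq z)$ from below by $1-Q_{\ell/2}\!\bigl(\mathsf{r_{cov}}(\lfloor q/2^p\rfloor\Lambda)/\sigma_G,\,z/\sigma_G\bigr)$. Substituting into the previous display, using the monotonicity $1-(1-a)^{n/\ell}$ in $a\in[0,1]$, and recalling that $t$ was fixed by \eqref{value_t} so that $\Lambda_3\subseteq\Lambda_2\subseteq\Lambda_1$, gives the stated bound \eqref{eq:dfrpackcov}.

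The only genuinely non-trivial ingredient is the covering-radius worst-casing inside Lemma~3, which is already proved separately; once that is in hand, the rest is the assembly above. The main obstacle in writing a careful proof is justifying the independence used in the product factorisation: one must point to the assumption, stated just after \eqref{q_noise_model}, that the rounding noise $\mathbf{c}_u$ and the quantisation noise $c_v$ have i.i.d.\ coordinates and are independent of the M-LWE-induced terms, so that Theorem~1's block-diagonal Gaussian-plus-uniform characterisation of $n_e$ really does split across the $n/\ell$ length-$\ell$ sub-vectors.
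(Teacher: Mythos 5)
Your proposal is correct and follows the paper's own route exactly: the paper derives \eqref{delta_UB} by the same product-lattice splitting and packing-radius relaxation, and then obtains Theorem~2 by substituting Lemma~3 with $z=\mathsf{r_{pack}}(2^{p-t}\lfloor q/2^p\rfloor\Lambda)$, which is precisely your assembly. Your added remark about justifying the block independence via the i.i.d.\ assumptions stated after \eqref{q_noise_model} is a point the paper leaves implicit, but it is the same argument.
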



\begin{table}[th]
\centering
\begin{threeparttable}[b]
\caption{\mbox{$(\mathsf{r_{pack}}, \mathsf{r_\mathsf{cov}})$ for different lattice $\Lambda$ \cite{BK:Conway93}}}
\label{latttice_para}
\vspace{-3mm} 
\begin{tabular}{|c|c|c|c|}
\hline
& $\mathsf{E_8}$ & $\mathsf{BW16} \tnote{1}$  & $\mathsf{Leech24}$\tnote{1}\\ \hline
$\mathsf{r_{pack}}$  &$\sqrt{2}/2$ & $\sqrt{2}$ & $2\sqrt{2}$  \\ \hline
$\mathsf{r_{cov}}$  &$1$ & $\sqrt{6}$ & $4$ \\ \hline
\end{tabular}
\begin{tablenotes}
       \item [1] We scale the original generator matrix to an integer matrix $\mathbf{B} \in \mathbb{Z}^{\ell \times \ell}$.
\end{tablenotes}
\end{threeparttable}
\vspace{-5mm}
\end{table}

\begin{remark}
Theorem 2 shows how the security parameters $(q,k, \eta_1, \eta_2)$ and compression parameters $(d_u, d_v=p-t)$ affect the DFR of a KRM-$\Lambda$ scheme. Since  $Q_{M}\left(a,b\right) $ is strictly increasing in $a$ and is strictly decreasing in $b$, it also tells us to select $\Lambda$ with large $\mathsf{r_{pack}}$ and small $\mathsf{r_{cov}}$.
\end{remark}

\vspace{-1mm}
\section{KRM-$\Lambda$ with Parameters used in Kyber }
In this section, we give an example of KRM-$\Lambda$, using the same security parameters $(q=3329, k=3,\eta_1=2,\eta_2=2)$ as in KYBER768.
We tune the values of the compression parameters $(d_u, d_v)$, to obtain a good trade-off between CER and DFR. Note that $\hat{q}=3328$, $p=5$, and $d_v=p-t$, where the values of $t$ are given in Example 1. For comparison purposes, we define the CER reduction ratio:
\begin{equation}
\setlength{\abovedisplayskip}{3pt}
\setlength{\belowdisplayskip}{3pt}
\text{CER-R}\triangleq 1-\dfrac{\text{CER of } \text{KRM}-\Lambda}{\text{CER of KYBER768}}. 
\end{equation}

\vspace{-0mm}
\begin{table}[th]
\centering
\begin{threeparttable}[b]
\caption{KRM-$\Lambda$ vs. KYBER768: $(q=3329, k=3,\eta_1=2,\eta_2=2)$}
\label{lattice_quantizer}\centering
\vspace{-3mm} 
\begin{tabular}{|c|c|c|c|c|}
\hline
& KYBER768 & KRM-$\mathsf{E_8}$ & KRM-$\mathsf{BW16}$& KRM-$\mathsf{Leech24}$\tnote{1} \\ \hline
$d_u$ & $10$ & $9$ & $10$ &$10$\\ \hline
$d_v$ & $4$ & $4$ & $3$ &$2$\\ \hline
$N$ & $256$ & $256$ & $320$ &$380$ \\ \hline
CER & $34$ & $31$ & $26.4$ & $21.6$\\ \hline
CER-R & $0\%$ & $8.82\%$ & $22.35\%$ & $36.47\%$ \\ \hline
$\delta$ & $2^{-164}$ & $2^{-174}$ & $2^{-263}$ & $2^{-172}$ \\ \hline
\end{tabular}
\begin{tablenotes}
       \item [1] Since $256=10\times 24+16$, for Leech24, we consider $\Lambda_1= \lfloor q/2^{{p}}\rfloor(\text{Leech24})^{10}\times\text{BW16}$, i.e., $10$ Leech24 quantization codewords with $d_v=2$ and $1$ BW16 quantization codeword with $d_v=3$. 
\end{tablenotes}
\vspace{-1mm}
\end{threeparttable}
\end{table}
The results are given in Table \ref{lattice_quantizer}. Compared to Table \ref{Kyber_Par}, we observe that the proposed KRM-$\Lambda$ outperforms the original Kyber \cite{NISTpqcdraft2023} and the original KRM-$\mathsf{E_8}$ \cite{MLWEE82021}, in terms of CER and DFR. Since the security parameters are the same as in Kyber, the security arguments remain the same (see Table \ref{Kyber_Security}).

\begin{remark}
Many applications prefer a fixed plaintext size, e.g., $256$ bits. A simple trick is to shorten $v$. Let $\kappa = \left\lceil 256\ell / N \right\rceil$ and $(\Lambda_1 =\lfloor q/2^{{p}}\rfloor \Lambda^{\kappa},  \Lambda_2 = 2^{{p} -{t}}\Lambda_1,  \Lambda_3 =2^{{p}} \lfloor q/2^{{p}} \rfloor \mathbb{Z}^{\kappa\ell})$. We only quantize the first $\kappa \ell$ coefficients in $\boldsymbol{\mathrm{t}}^{T}\boldsymbol{\mathrm{r}}+e_2$, resulting in a shortened $v$. The idea is to use a smaller lattice quantization codebook of size $\kappa N/ \ell \approx 256$ bits. The CER is computed by 
\begin{equation}
\setlength{\abovedisplayskip}{3pt}
\setlength{\belowdisplayskip}{3pt}
\text{CER}=\dfrac{knd_u+\kappa \ell d_v}{256}. 
\end{equation}%
For BW16 with $(d_u=10, d_v=3, N=320)$ in Table \ref{lattice_quantizer}, we have $\text{CER}=32.4$, which is still smaller than KYBER768.

Besides shortening, we are interested in designing a lattice quantization codebook of size exactly $256$ bits. We require $\Lambda$ to have an integer sublattice $2^{t}\mathbb{Z}^{\ell}$ for some $t$, satisfying $\mathsf{Vol}(2^{t}\mathbb{Z}^{\ell})/\mathsf{Vol}(\Lambda)=2^{\ell}$. Therefore, the codebook size is $n/\ell \cdot \ell =256$ bits. At the moment, we are only aware of $\mathsf{E_8}$ that satisfies the above conditions for $(\ell=8, t=1)$. In general, we expect such $\Lambda$ to have a low density but large minimum distance~\cite{1LDPC}.
\end{remark}

\section{Conclusion}
We have proposed a framework that reduces the design of M-LWE-based key exchange protocol to a handful of lattice quantizer choices. We have also proved bounds on DFR against the common situation
of building lattices for quantization. We show that lattice quantizer is more effective than lattice encoding, in terms of reducing CER and DFR.



\vspace{0mm}
\bibliographystyle{IEEEtran}
\bibliography{IEEEabrv,LIUBIB}

\end{document}